\newtheorem{defin}{Definition}[section]
\newtheorem{theo}[defin]{Theorem}
  \newenvironment{theorem}{\begin{theo} \sl}{\end{theo}}
\newtheorem{lem}[defin]{Lemma}
  \newenvironment{lemma}{\begin{lem} \sl}{\end{lem}}
\newtheorem{propo}[defin]{Proposition}
\newtheorem{coro}[defin]{Corollary}
\newtheorem{obse}[defin]{Observation}
\newtheorem{rem}[defin]{Remark}
\newtheorem{myfact}[defin]{Fact}
\newenvironment{proof}{\emph{Proof.}}{\hfill $\Box$ \medskip\\}
\newcommand{\B}{\ensuremath{\mathcal{B}}}
\newcommand{\C}{\ensuremath{\mathcal{C}}}
\newcommand{\G}{\ensuremath{\mathcal{G}}}
\newcommand{\graph}{\G}
\newcommand{\T}{\mathcal{T}}
\newcommand{\M}{\mathcal{M}}
\newcommand{\Reals}{{\mathbb{R}}}            
\newcommand{\eps}{\varepsilon}               
\newcommand{\bd}{\partial}
\DeclareMathOperator{\polylog}{polylog}
\DeclareMathOperator{\myint}{Int}
\DeclareMathOperator{\myext}{Ext}
\DeclareMathOperator{\myweight}{{w}}
\newcommand{\weight}[1]{\myweight(#1)}
\newcommand{\etal}{{\emph{et~al.}\xspace}}
\def\dist{\mathbf{d}}
\def\disto{\dist_{w}}
\def\distg{\dist_{\graph}}
\def\distt{\dist_{\T}}
\def\distbo{\dist_{\B}}
\def\distso{\dist_{\sigma,w}}
\renewcommand{\leq}{\leqslant}
\renewcommand{\geq}{\geqslant}
\newcommand{\mytree}{T}                      
\newcommand{\tree}{\mytree}                   
\newcommand{\spath}{\sigma}
\newcommand{\myin}{\mathrm{in}}
\newcommand{\myout}{\mathrm{out}}
\newcommand{\Sin}{S_{\myin}}
\newcommand{\Sout}{S_{\myout}}
\newcommand{\BeginMyItemize}{\begin{itemize}\setlength{\itemsep}{-\parskip}}
\newcommand{\EndMyItemize}{\end{itemize}}
\newcommand{\myitemize}[1]{\BeginMyItemize #1 \EndMyItemize}
\newcommand{\BeginMyEnumerate}{\begin{enumerate}\setlength{\itemsep}{-\parskip}}
\newcommand{\EndMyEnumerate}{\end{enumerate}}
\newcommand{\myenumerate}[1]{\BeginMyEnumerate #1 \EndMyEnumerate}
\title{Geodesic Spanners for Points on a Polyhedral Terrain}
\author{Mohammad~Ali~Abam\thanks{Computer Engineering Department, Sharif University of Technology.
    Email: {\tt \{abam,mjrezaei\}@sharif.edu.}}
    \and
    Mark de Berg\thanks{Department of Computer Science, TU Eindhoven, the Netherlands.
    Email: {\tt mdberg@win.tue.nl}. MdB was supported by the Netherlands' Organisation
    for Scientific Research (NWO) under project no.~024.002.003.}
    \and
    Mohammad~Javad Rezaei Seraji$^*$
}
\date{}
\begin{document}
\maketitle

\begin{abstract}
Let $S$ be a set $S$ of $n$ points on a polyhedral terrain~$\T$ in $\Reals^3$, and let $\eps>0$ be a
fixed constant. We prove that $S$ admits a $(2+\eps)$-spanner with $O(n\log n)$ edges
with respect to the geodesic distance. This is the first spanner with constant spanning
ratio and a near-linear number of edges for points on a terrain. On our way to this result,
we prove that any set of $n$ weighted points in $\Reals^d$ admits an additively weighted
$(2+\eps)$-spanner with $O(n)$ edges; this improves the previously best known bound
on the spanning ratio (which was $5+\eps$), and almost matches the lower bound.
\end{abstract}

\newpage

\section{Introduction}

\paragraph{Background and motivation.}
When designing networks on a given set of nodes---whether they are road or railway networks,
or computer networks, or some other type of networks---there are often two conflicting
desiderata. On the one hand one would like to have fast connections between any pair
of nodes, and on the other hand one would like the network to be sparse.
This leads to the concept of \emph{spanners}, as defined next.

In an abstract setting, one is given a metric space $\M=(S, \dist_\M)$, where the
elements from $S$ are called \emph{points}---the points represent the nodes in the
network---and $\dist_\M$ is a metric on~$S$. A \emph{$t$-spanner} for $\M$,
for a given $t>1$, is an edge-weighted graph $\graph=(S,E)$ where the weight
of each edge $(p,q)\in E$ is equal to $\dist_\M(p,q)$ and the following condition
is satisfied: for all pairs $p,q\in S$ we have that $\distg(p,q)\leq t\cdot \dist_\M(p,q)$,
where $\distg(p,q)$ denotes the distance between $p$ and $q$ in $\graph$.
(The distance between $p$ and $q$ in $\graph$ is defined as the minimum weight of any path
connecting $p$ and $q$ in $\graph$.) In other words, the distance between any two points
in the spanner $\graph$ approximates their original distance in the metric space $\M$
up to a factor~$t$. The factor $t$ is called the \emph{spanning ratio} (or dilation,
or stretch factor) of~$\graph$. The question now becomes: can we construct a sparse
graph with small spanning ratio? Or, stated differently: given a desired spanning
ratio~$t$, how many edges do we need to obtain a $t$-spanner?

\paragraph{Previous work.}
As mentioned, the concept of spanners arises naturally in the design of efficient networks.
Spanners have also been used as a tool in solving a variety of other problems. It is not
surprising therefore that spanners have been studied extensively. Many papers on spanners
focus on obtaining spanners of small size, that is, with a small number of edges. This
is also the focus of our paper. However, other properties---spanners in which the total
weight of the edges is small, or spanners in which the maximum degree of the nodes is small---are
also of interest. Dynamic and kinetic spanners have been considered as well---see \cite{ab-ks-11, r-fdgs-12} for some recent results.

In the most general version, where we do not have any additional properties of the
underlying metric space, one can get a $(2k+1)$-spanner of size $O(n^{1+1/k})$,
for any integer~$k>0$ by the method given in \cite{addjj-osswg-93}.
In particular, in general metric spaces it is not known any way to obtain constant
spanning ratio with a spanner of size~$O(n \polylog n)$. For several special
types of metric spaces much better results can be obtained, however.

One important case is when $S$ is a set of $n$ points in $\Reals^d$ and
the \emph{Euclidean metric} is used. For any fixed $\eps>0$ one can then obtain a
$(1+\eps)$-spanner with $O(n)$ edges. More precisely, there is a $(1+\eps)$-spanner
with $O(n/\eps^{d-1})$ edges. See the book by Narasimhan and Smid~\cite{ns-gsn-07} for an
extensive discussion on geometric spanners. Another special case that received ample
attention~\cite{cgmz-ohrdm-05, cg-sdpss-06,gr-odsdms-08, hm-fcnld-05, t-bealdm-04}
are metric spaces of \emph{bounded doubling dimension}. (A metric space $\M=(S, \dist_\M)$ has
doubling dimension $d$ if any ball of radius $r$ in the space can be covered
by $2^d$ balls of radius~$r/2$.) Also for spaces of whose doubling dimension
is a constant---note that this is a generalization of Euclidean spaces---it is
possible to obtain, for any fixed~$\eps>0$, a $(1+\eps)$-spanner with $O(n)$ edges.

Another natural generalization to study is the case where the points in $S$ lie
on a polyhedral terrain $\T$ (or, more generally, on a surface) and the
\emph{geodesic distance} is used. A polyhedral terrain is the graph of a piecewise
linear function $f:D\rightarrow\Reals^3$, where $D$ is a convex polygonal region in the plane.
Polyhedral terrains, or {\sc tin}s, are often used in {\sc gis} to model mountainous landscapes.
The geodesic distance, $\distt(p,q)$, between two points $p,q\in\T$ is the length on the
shortest path on the terrain between $p$ and~$q$. We call a spanner for points on
a terrain with respect to the geodesic distance a \emph{geodesic spanner}.
At first sight it may seem that geodesic spanner are very similar to Euclidean spanners.
This is not the case: a crucial difference is that the metric space $(\T,\distt)$ does
not have bounded doubling dimension. Indeed, it is unknown whether any set of
points on a terrain admits a spanner with constant spanning ratio and of size~$O(n \polylog n)$.

There are two recent results that deal with what can be considered as special cases
of geodesic spanners.

First, consider the case where the terrain is completely flat
except for $n$ needle-like peaks, and the points in $S$ are located on the top of these peaks.
This leads to the concept of \emph{additively weighted spanners}, as studied by
Abam \etal ~\cite{abfgs-gswps-11}. Here one is given a set $S$ of points in $\Reals^2$
(or, more generally, in $\Reals^d$), where each $p\in S$ has a non-negative weight $\weight{p}$;
the weights model the heights of the peaks.
The additively weighted distance $\disto(p,q)$ between two points $p,q\in S$ is now defined as
\[
\disto(p,q) =
    \left\{ \begin{array}{ll}
              0 & \mbox{ if $p=q$,} \\
              \weight{p} + |pq| + \weight{q} & \mbox{ if $p \neq q$}
            \end{array}
    \right.
\]
where  $|.|$ denote the Euclidean distance. A $t$-spanner $\graph$ for the metric space
$(S, \disto)$ is called an \emph{additively weighted $t$-spanner}. Note that $(S, \disto)$
does not necessarily have bounded doubling dimension. (To see this,take a set
$S$ of $n$ points inside a unit disk in the plane, each having
unit weight.)
Nevertheless, Abam~\etal~\cite{abfgs-gswps-11} showed that there exists a $(5+\eps)$-spanner
$\graph$ with a linear number of edges for the metric space $(S,\disto)$.
They also proved that for any $\eps > 0$, there are weighted point sets $S$ such that any
$(2-\eps)$-spanner of $(S,\disto)$ has $\Omega(n^2)$ edges.

A second special case of spanners on a terrain is where the terrain is again completely
flat, except for a number of polygonal and plateaus at very high elevations,
and the points on $S$ are located on the flat part of the terrain. If the plateaus are
sufficiently high, then this terrain can be seen as a domain with polygonal holes.
Abam~\etal~\cite{aahz-gspipd-15} recently showed that for a set of $n$ points in
a polygonal domain with $h>0$ holes, there exists a $(5+\epsilon)$-spanner
of size~$O(n\sqrt{h}\log^2 n)$. When $h=0$, they obtain a $(\sqrt{10}+\epsilon)$-spanner
with $O(n \log^2 n)$ edges.

The main question is still open, however: is there a geodesic spanner with $O(n\polylog n)$ edges
and constant spanning ratio for any set of $n$ points on a terrain?

\paragraph{Our results.}
We answer the question above affirmatively by showing that, for any constant~$\eps>0$,
there exists a $(2+\eps)$-spanner with $O(n\log n)$ edges. Note that our result
not only generalizes the recent result of Abam~\etal~\cite{aahz-gspipd-15}, it also
improves both the spanning ratio and the size of the spanner. Also note that
the lower bound for additively weighted spanners implies that we cannot hope to
get spanning ratio $2-\eps$ with a subquadratic number of edges.
On the way to proving this result, we present a new algorithm to construct an
additively weighted spanner. This spanner has $O(n)$ edges, like the one of
Abam \etal ~\cite{abfgs-gswps-11} but its spanning ratio is $2+\eps$, an improvement
over the previously known bound of~$5+\eps$. Given the lower bound and the
fact that our spanner uses $O(n)$ edges, this is essentially optimal.
Our method to obtain a $(2+\eps)$-spanner on a terrain uses, besides the
additively weighted spanners, another tool that we believe is of independent interest:
we show that for any set of $n$ points on a terrain, there is a
\emph{balanced shortest-path separator}: a shortest path connecting two points
on $\bd\T$, or a triangle whose sides are shortest paths, that partition the
point set $S$ into two subsets of size at least $2n/9$.

\section{Additively weighted spanners for points in~$\Reals^d$}
In this section we present our improved spanner construction for additively weighted point sets.
We use the same global approach as Abam~\etal~\cite{abfgs-gswps-11}---we cluster the points
in a suitable way, then we construct a spanner on the cluster centers, and finally we connect
the points to the cluster centers---but the implementation of the various steps is different.
\medskip

Let~$S$ be the given weighted set of $n$ points in~$\Reals^d$ for which we want to
construct a spanner. We will partition $S$ into a number of clusters~$C_i\subseteq S$,
each with a designated center~$c_i\in C_i$,
such that the clusters have the following two properties.
Let $\C$ be the set of all cluster centers.
\myenumerate{
\item[(i)]
    The metric space $(\C,\disto)$ has doubling dimension~$O(d\log(1/\eps))$.
\item[(ii)]
    For any cluster~$C_i$ and any point $p\in C_i$, we have
    $\disto(p,c_i) \leq (2+\eps)\cdot\weight{p}$.
}
The following algorithm takes as input the weighted point set~$S$ and a parameter
$\eps>0$, and computes a clustering of~$S$ with these properties.
\myenumerate{
\item Sort the points of $S$ in non-decreasing order of their weight, with ties
    broken arbitrarily. Let $p_1,p_2,\ldots,p_n$ be the resulting sorted sequence.
\item Initialize the first cluster as $C_1 := \{p_1\}$,
      define $c_1 := p_1$ to be its center, and
      initialize the set of cluster centers as $\C := \{c_1\}$.
      Set $m := 1$ to be the current number of clusters.
\item Handle the points $p_2,\ldots,p_n$ in order, as follows.
      \myenumerate{
      \item Compute a center $c_j\in \C$ with $1 \leq j \leq m$ whose Euclidean distance
        to $p_i$ is minimum.
      \item If $|c_jp_i| \leq \eps \cdot \weight{p_i}$, then
             add $p_i$ to cluster~$C_j$.
            Otherwise, start a new cluster with $p_i$ as center:
            set $m := m+1$, set $C_m = \{p_i\}$ and $c_m := p_i$,
            and set $\C := \C \cup \{p_i\}$.
      }
\item Return the collection $\{C_1,\ldots,C_m\}$ of clusters, with $\C$ as cluster centers.
}
\begin{lemma}\label{le:cluster-computation}
The metric space $(\C,\disto)$ has doubling dimension~$O(d\log(1/\eps))$.
\end{lemma}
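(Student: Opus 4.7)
The plan is to exploit a simple Euclidean packing property of the cluster centers and then deduce the doubling-dimension bound by the standard packing-covering equivalence.

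The first step is to establish that for any two distinct cluster centers $c, c' \in \C$,
\[
|cc'| > \eps \cdot \max(\weight{c}, \weight{c'}).
\]
Since the algorithm processes points in non-decreasing order of weight, the later-added of the two (say $c'$) was rejected from joining the cluster of $c$ when it was handled, so $|cc'| > \eps \weight{c'} \geq \eps \weight{c}$, as required.

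Next I would reduce the statement to bounding packing numbers: it suffices to show that for every $\disto$-ball $B(c, r) = \{c' \in \C : \disto(c, c') \leq r\}$, any subset $N \subseteq B(c, r)$ with pairwise $\disto$-distance exceeding $r/2$ has $|N| \leq (1/\eps)^{O(d)}$. Indeed, taking $N$ maximal, the $\disto$-balls of radius $r/2$ around its points cover $B(c, r)$, so the doubling constant of $(\C, \disto)$ is $(1/\eps)^{O(d)} = 2^{O(d\log(1/\eps))}$, which is exactly the claimed doubling dimension. To bound $|N|$, I would split it by weight scale: for $k \in \{0, 1, 2\}$ set $N_k := \{c' \in N : r/2^{k+1} < \weight{c'} \leq r/2^k\}$, and let $N^{\star} := \{c' \in N : \weight{c'} \leq r/8\}$. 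Every such point satisfies $|cc'| \leq r$, so each of these sets lies in a Euclidean ball of radius $r$ around $c$. For $k \in \{0, 1, 2\}$, the packing property from the previous step gives pairwise Euclidean distance exceeding $\eps r/2^{k+1}$ within $N_k$, and standard volume packing in $\Reals^d$ yields $|N_k| = (1/\eps)^{O(d)}$. For $N^{\star}$, the weights of any two centers sum to at most $r/4$, so the hypothesis $\disto > r/2$ forces pairwise Euclidean distance above $r/4$, giving $|N^{\star}| = O(1)^d$. Summing over the four classes yields $|N| = (1/\eps)^{O(d)}$.

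The main obstacle I expect is the very-light cluster centers, for which the Euclidean separation coming from the packing property in the first step degenerates as weights tend to zero. The trick is that once weights are sufficiently small, the hypothesis $\disto > r/2$ itself already forces a Euclidean separation of $\Omega(r)$ that is independent of $\eps$, letting one lump all small-weight centers together at the cost of only a constant factor.
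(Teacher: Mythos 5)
Your proof is correct and follows essentially the same route as the paper's: packing--covering duality for the doubling constant, a split of the $r/2$-separated set at the weight threshold $r/8$, with the light centers separated by $\Omega(r)$ in Euclidean distance directly from the $\disto$-separation and the heavy centers separated by $\Omega(\eps r)$ via the algorithm's rejection rule. The only (harmless) differences are that you prove the slightly stronger separation $|cc'|>\eps\cdot\max(\weight{c},\weight{c'})$ where the paper uses $\min$, and that your dyadic subdivision of the heavy centers into three classes is unnecessary---a single class with weight $>r/8$ already gives separation $>\eps r/8$, as in the paper.
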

\begin{proof}
Consider a $\disto$-ball~$B(c_i,r)$ with radius~$r$ centered at a point $c_i\in \C$.
We must show that $B(c_i,r)$ can be covered by $2^{O(d\log(1/\eps))}$ balls of
radius~$r/2$. To this end, let $\C^*\subseteq B(c_i,r)$ be a maximal set of centers
such that $\disto(c_{j}, c_{k}) > r/2$ for every pair $c_{j}, c_{k} \in \C^*$.
Then the set of balls~$\{ B(c_j,r/2) : c_j \in C^*\}$ covers~$B(c_i,r)$.
Hence, suffices to prove that $|\C^*| = O(1/\eps^d)$.

Define $\C^*_1 := \{ c_j\in \C^* : \weight{c_j} \leq r/8\}$ and  $\C^*_2 := \C^* \setminus \C^*_1$.
Since the $\disto$-distance of any two points $c_j,c_k\in \C^*$ is at least $r/2$,
we have $|c_j c_k| \geq r/2 - \weight{c_j} - \weight{c_k}$. For $c_j,c_k\in \C^*_1$
this implies that $|c_j c_k| \geq r/4$. A simple packing argument shows that
we can only put $O(t^d)$ points whose mutual distances are at least $r/t$ into a ball
with radius $r$ in $\Reals^d$. We conclude that $|\C^*_1| = O(4^d)$.
To bound the size of $\C^*_2$ we use the fact that in our construction any two centers
$c_j,c_k\in \C$ satisfy
$|c_j c_k| > \eps\cdot \min(\weight{c_j},\weight{c_k})$. For $c_j,c_k\in \C^*_2$ this implies that
$|c_j c_k| > \eps \cdot (r/8)$. The packing argument now implies $|\C^*_2| = O( (8/\eps)^d)$.
Therefore, we have $|\C^*| = |\C^*_1|+ |\C^*_2| = O(4^d+1/\eps^d)= O(1/\eps^d)$.
\end{proof}
We can now compute a $(2+\eps)$-spanner~$\graph$ on $S$, for a given~$0\leq \eps \leq \sqrt{2}-1$, as follows.
\myenumerate{
\item[I.] Compute a clustering $\{C_1,\ldots,C_m\}$ and a set $\C$ of cluster centers,
    as described above.
\item[II.] Construct a $(1+\eps)$-spanner $\B$ on $\C$ using the method given
    by Gottlieb \etal~\cite{gr-odsdms-08} for computing spanners in spaces of bounded doubling dimension.
    The spanner produced by this method has the special property that the maximum degree
    in $\B$ is $O((2+1/\eps)^{O(\mathit{dim})})$, where $\mathit{dim}$ is the doubling dimension.
    We call $\B$ the \emph{backbone} of our spanner.
\item[III.] To obtain our final spanner~$\graph$, we connect each non-center point~$p$
    to the backbone: we connect~$p$ to the center $c_i$ of the cluster $C_i$ containing~$p$,
    and in addition we connect $p$ to all the neighbors of $c_i$ in $\B$.
}
\begin{theorem}\label{th:additive-spanner}
Let $S$ be a set of $n$ weighted points in $\Reals^d$, and let $\eps>0$
be a fixed constant. There exists a $(2+\eps)$-spanner
with $(2+1/\eps)^{O(d\log(1/\eps))}n$ edges for the metric space $(S, \disto)$.
\end{theorem}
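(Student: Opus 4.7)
The construction is already in place, so the plan is to verify the clustering properties the analysis relies on, count the edges, and bound the spanning ratio. Property~(i) of the clustering is exactly Lemma~\ref{le:cluster-computation}. Property~(ii) is immediate from the algorithm: when $p_i$ is assigned to cluster $C_j$ the assignment rule gives $|c_jp_i|\leq\eps\weight{p_i}$, and because points are processed in non-decreasing order of weight the center $c_j$ was already a center when $p_i$ arrived, so $\weight{c_j}\leq\weight{p_i}$; hence $\disto(p_i,c_j)=\weight{p_i}+|c_jp_i|+\weight{c_j}\leq(2+\eps)\weight{p_i}$.

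For the edge count, Lemma~\ref{le:cluster-computation} combined with Gottlieb~\etal's construction gives a backbone $\B$ of maximum degree $K=(2+1/\eps)^{O(d\log(1/\eps))}$. The backbone contributes $O(K|\C|)$ edges, and each non-center contributes at most $K+1$ additional edges (one to its cluster center and $K$ to the backbone-neighbors of that center), yielding the claimed bound of $(2+1/\eps)^{O(d\log(1/\eps))}\,n$ edges in total.

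The spanning ratio then proceeds by a case analysis on two points $p\in C_i$ and $q\in C_j$. If $i=j$, the path $p\to c_i\to q$ has length at most $(2+\eps)(\weight{p}+\weight{q})\leq(2+\eps)\disto(p,q)$ by property~(ii). If $i\neq j$, let $c_i=x_0,x_1,\ldots,x_k=c_j$ be a shortest $\B$-path, of additively weighted length at most $(1+\eps)\disto(c_i,c_j)$. For $k\geq 2$ the plan is to route as $p\to x_1\to\cdots\to x_{k-1}\to q$, deliberately bypassing $c_i$ and $c_j$ so as not to pay for their weights twice; using $|pc_i|\leq\eps\weight{p}$, $|qc_j|\leq\eps\weight{q}$, $\weight{c_i}\leq\weight{p}$, and $\weight{c_j}\leq\weight{q}$, together with the spanner bound on the $x_i$-path and the triangle estimate $\disto(c_i,c_j)\leq|pq|+\eps(\weight{p}+\weight{q})+\weight{c_i}+\weight{c_j}$, one obtains a path of length at most $(1+O(\eps))\disto(p,q)$. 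The genuinely tight case is $k=1$: here $c_i$ and $c_j$ are adjacent in $\B$ and no intermediate backbone-neighbor exists to skip through, so one takes the better of the two two-edge paths $p\to c_i\to q$ and $p\to c_j\to q$; since $\disto(p,c_i)+\disto(c_i,q)\leq(3+2\eps)\weight{p}+|pq|+\weight{q}$ and symmetrically for the other path, a $\min\leq\mathrm{avg}$ argument gives $(2+O(\eps))\disto(p,q)$.

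The main obstacle is exactly the case $k=1$, which forces the factor~$2$ and is what matches the $\Omega(n^2)$ lower bound for ratio $2-\eps$. The technical heart of the construction, and the reason for $(2+\eps)$ versus the $5+\eps$ of Abam~\etal, is the decision to connect every non-center to the entire backbone-neighborhood of its center; this is what allows the $k\geq 2$ routing to skip the cluster centers and save the $2\weight{c_i}+2\weight{c_j}$ contribution that a naive route through $c_i$ and $c_j$ would incur. A final rescaling of the algorithm's parameter $\eps$ by a suitable constant absorbs all $O(\eps)$ slack and delivers the advertised $(2+\eps)$ spanning ratio.
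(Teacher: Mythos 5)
Your proposal is correct and follows the paper's own argument: the same clustering, the same backbone with bounded degree, and the same two-case analysis in which the multi-edge case is handled by routing from $p$ directly to the backbone-neighbor of $c_i$ (and symmetrically for $q$), which is exactly how the paper avoids paying $2\weight{c_i}+2\weight{c_j}$ and gets ratio $1+O(\eps)$ there. The only divergence is the single-edge case $k=1$, where the paper sets $c_s=c_t=c_j$ and routes $p\to c_j\to q$, whereas you take the minimum of the two two-edge detours via $c_i$ and via $c_j$; your treatment is in fact the more careful one, since the route through $c_j$ alone costs roughly $\weight{p}+|pq|+3\weight{q}$ and can exceed $(2+\eps)\disto(p,q)$ when $\weight{q}>\weight{p}+|pq|$, while your $\min\leq\mathrm{avg}$ argument closes that gap.
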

\begin{proof}
The bound on the number of edges follows immediately from Lemma~\ref{le:cluster-computation}
together with the fact that
the maximum degree in the backbone~$\B$ is $O((2+1/\eps)^{O(\mathit{dim})})$,
where $\mathit{dim}$ is the doubling dimension.
%

It remains to prove the bound on the spanning ratio.
%
Let $\graph=(S,E)$ be the computed spanner. We must prove that
$\distg(p,q)\leq (2+\eps)\cdot \disto(p,q)$. If $(p,q)\in E$ this is obviously true.
If both $p$ and $q$ are centers then this is also true, since
the backbone $\B$ is a $(1+\eps)$-spanner on~$\C$. So now
consider the case where one or both of $p,q$ are non-center points.
Let $C_i$ and $C_j$ be the clusters containing~$p_i$ and $p_j$, respectively.
Note that our construction guarantees that
$|p c_i| \leq \eps \cdot \weight{p}$ and that
$\weight{p} \geq \weight{c_i}$;
two similar properties hold for $q$ and $c_j$.
(These properties are used in Inequalities~(\ref{eq1})--(\ref{eq4}) below.)
We consider two cases.
\begin{itemize}
\item The first case is that $p$ and $q$ belong to the same cluster, so $c_i=c_j$.
    We then have
    \begin{eqnarray}
    \distg(p,q)
        & \leq & \distg(p,c_i) + \distg(c_i,q)  \nonumber \\[1mm]
        &   =  & \big( \weight{p} + |pc_i| + \weight{c_i} \; \big) +
                 \big( \weight{c_i} +  |c_iq| + \weight{q} \; \big) \nonumber \\[1mm]
        & \leq & \big( \weight{p} + \eps \cdot \weight{p} + \weight{p} \; \big) +
                 \big( \weight{q} + \eps \cdot \weight{q} + \weight{q} \; \big) \label{eq1} \\[1mm]
        &  =  & (2+\eps) \cdot (\weight{p}+\weight{q}) \nonumber \\[1mm]
        & \leq & (2+\eps) \cdot (\weight{p}+|pq|+ \weight{q}) \nonumber \\[1mm]
        &  = & (2+\eps) \cdot \disto(p,q). \nonumber
    \end{eqnarray}
\item  The second case is that $p$ and $q$ belong to different clusters, so $c_i\neq c_j$.
    Since the backbone~$\B$ is a $(1+\eps)$-spanner on~$\C$, the shortest path in $\B$
    from $c_i$ to $c_j$ has length at most $(1+\eps)\cdot \disto(c_i,c_j)$.
    Define $c_s$ and $c_t$ to be the neighbors of $c_i$ and $c_j$ along this path, respectively.
    (If the path consists of two edges then $c_s = c_t$,
    and if it consists of a single edge then we define $c_s = c_t = c_j$.)
    Note that $(p,c_s)$ and $(c_t,q)$ are edges in~$\graph$. Hence,
    \[
    \begin{array}{lll}
    \distg(p,q) & \leq & \disto(p,c_s) + \distbo(c_s,c_t)+\disto(c_t,q) \\[1mm]
                 &  =   & \big( \weight{p} + |pc_s|+ \weight{c_s} \; \big)
                          + \distbo(c_s,c_t)
                          + \big( \weight{c_t} + |c_t q| + \weight{q} \; \big) \\[1mm]
                 & \leq & \weight{p} + |pc_i| + |c_i c_s| + \weight{c_s}
                          + \distbo(c_s,c_t) + \weight{c_t}+|c_t c_j| +|c_j q| + \weight{q}.
    \end{array}
    \]
    Moreover, because $(c_i,c_s,\ldots,c_t,c_j)$ is a shortest path in $\B$ we have
    \[
    \begin{array}{lll}
    \distbo(c_i,c_j) = \weight{c_i} + |c_i c_s| + \weight{c_s}
                            + \distbo(c_s,c_t)
                            + \weight{c_t} + |c_t c_j| + \weight{c_j}.
    \end{array}
    \]
    Since $\B$ is a $(1+\eps)$-spanner we thus get
    \[
    \begin{array}{lll}
    |c_i c_s| + \weight{c_s} + \distbo(c_s,c_t) + \weight{c_t} + |c_t c_j|
        &  = &  \distbo(c_i,c_j) - \weight{c_i} - \weight{c_j} \\[1mm]
        &  \leq &  (1+\eps) \cdot \disto(c_i,c_j) - \weight{c_i} - \weight{c_j}
    \end{array}
    \]
    It follows that
    \begin{eqnarray}
    \distg(p,q) & \leq &  \weight{p} + |pc_i| + (1+\eps) \cdot \disto(c_i,c_j)
                            - \weight{c_i} - \weight{c_j} + |c_j q| + \weight{q} \nonumber \\[1mm]
                 & \leq & \weight{p} + \eps \cdot \weight{p}
                            + (1+\eps)\cdot \disto(c_i,c_j)
                            - \weight{c_i} - \weight{c_j} + \eps \cdot \weight{q}
                            + \weight{q} \label{eq2} \\[1mm]
                 & \leq & \weight{p} + \eps \cdot \weight{p}
                            + (1+\eps)\cdot(\weight{c_i} + |c_ic_j| + \weight{c_j})
                            - \weight{c_i} - \weight{c_j} + \eps \cdot \weight{q}
                            + \weight{q} \nonumber \\[1mm]
                 &  =   & (1+\eps) \cdot (\weight{p} + \weight{q})
                           + (1+ \eps)\cdot|c_ic_j|
                           + \eps \cdot(\weight{c_i}+\weight{c_j}) \nonumber \\[1mm]
                 & \leq & (1+\eps) \cdot (\weight{p}+\weight{q})+ (1+ \eps) \cdot (|c_ip|+|pq|+|qc_j|) + \eps \cdot  (\weight{p}+\weight{q}) \label{eq3} \\[1mm]
                 & \leq & (1+2\eps) \cdot (\weight{p}+\weight{q})+ (1+ \eps) \cdot (\eps \cdot \weight{p}+|pq|+\eps \cdot \weight{q}) \label{eq4}\\[1mm]
                 &  =  & (1+3\eps+ \eps^2) \cdot (\weight{p}+\weight{q})+ (1+ \eps) \cdot |pq| \nonumber \\[1mm]
                 & \leq & (1+3\eps+ \eps^2) \cdot (\weight{p}+\weight{q}+|pq|) \nonumber \\[1mm]
                 &   =  & (1+3\eps+ \eps^2) \cdot \disto(p,q) \nonumber \\[1mm]
                 & \leq & (2+\eps) \cdot \disto(p,q) \nonumber
    \end{eqnarray}
    where the last inequality holds because we can assume without loss of generality
    that $\eps \leq \sqrt{2}-1$.
\end{itemize}
Thus in both cases we have $\distg(p,q) \leq (2+\eps) \cdot \disto(p,q)$.
\end{proof}
\section{Spanners for points on a polyhedral terrain}
Let $\T$ be a polyhedral terrain with $m$~vertices, and let $S$ be a set of $n$~points
on~$\T$. In this section we show that there is a $(2+\eps)$-spanner for $S$ with
respect to $\distt$, the geodesic distance on~$\T$.
Our global approach is divide-and-conquer: we partition $S$ into two subsets of roughly
equal size, compute spanners for these subsets recursively, and then generate a set
of edges to connect the points from the two subsets. For the latter step, it is important
that the two subsets are separated in a suitable way. In particular, we need to separate
the subsets by shortest paths (not necessarily between points in $S$). Next we define
the two types of separator that we allow more precisely, and we show that a suitable
separator always exists.
\medskip

The first type of separator is a shortest path $\spath(u,v)$ that connects two
points $u,v\in \bd\T$. Such a shortest path partitions $\Delta$ into two regions:
the closed region $\spath^+(u,v)$ consisting of all points $q\in\Delta$ that lie
to the right of the (directed) path~$\spath(u,v)$, and the open region~$\spath^-(u,v)$
consisting of all points to the left of~$\spath(u,v)$; see Fig.~\ref{fi:separator}(i).
Note that parts of~$\spath(u,v)$  may lie on $\bd\T$, so
$\myint(\spath^+(u,v))$ and, similarly, $\myint(\spath^-(u,v))$,
need not be connected where $\myint(.)$ denotes the interior.
\begin{figure}
\begin{center}
\includegraphics{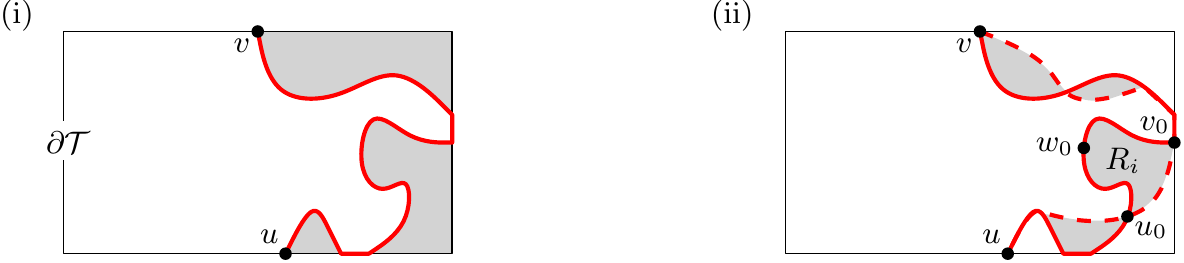}
\end{center}
\caption{(i) The shortest path $\spath(u,v)$ (in red) partitions $\T$ into two regions,
$\spath^+(u,v)$ (in grey) and $\spath^-(u,v)$ (in white). (ii) The paths
$\spath(u,v)$ and $\spath'(u,v)$ enclose a number of regions (in grey).
If $R_i$ contains more than $4n/9$ points, it is turned into an sp-triangle
by adding a point $w_0$ on one of the paths from $u_0$ to $v_0$.}
\label{fi:sp-separator}
\end{figure}

The second type of separators that we allow are defined as follows.
Consider three points $u,v,w\in\T$ with shortest paths $\spath(u,v)$, $\spath(v,w)$,
and $\spath(w,u)$ connecting them, and assume these paths are pairwise disjoint
except at shared endpoints. We call the closed region $\Delta$ bounded by
such a triple of paths a \emph{shortest-path triangle}, or \emph{sp-triangle} for short.
The paths $\spath(u,v)$, $\spath(v,w)$, and $\spath(w,u)$ are called the \emph{sides} of~$\Delta$.
A \emph{degenerate sp-triangle} is either a shortest path, or a path along~$\bd\T$.
In the following, when we talk about sp-triangles we also allow degenerate sp-triangles.

We call a separator of one of the two types defined above an \emph{sp-separator}.
The main tool in our spanner construction is the following theorem.
\begin{theorem} \label{th:balanced-triangle}
For any set of $n$ points on a polyhedral terrain~$\T$ there is a balanced sp-separator.
More precisely, there is either a shortest path $\spath(u,v)$ connecting two points
$u,v\in\bd\T$ such that $2n/9 \leq |\sigma^+(u,v)\cap S| \leq 2n/3$, or there is
an sp-triangle $\Delta$ such that $2n/9 \leq |\Delta\cap S| \leq 2n/3$.
\end{theorem}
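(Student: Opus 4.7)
My plan is to first attempt a separator of type~(i)---a single shortest path between two boundary points---via a sweeping/continuity argument. If no such balanced separator exists, I construct a balanced sp-triangle from a pair of shortest paths together with an auxiliary vertex.

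For the single-path attempt, I fix a point $u \in \bd\T$ and move $v$ along~$\bd\T$, defining $f(v) := |\spath^+(u,v) \cap S|$. As $v$ slides along the boundary, the shortest path $\spath(u,v)$ moves across $\T$ and $f$ changes by $\pm 1$ whenever the swept path passes a point of~$S$. Since $f$ ranges between $0$ and $n$ as $v$ traverses~$\bd\T$, there is a value $v^*$ minimizing $|f(v^*) - n/2|$. If $f(v^*) \in [2n/9,\, 2n/3]$ we are done; otherwise $f$ must jump over this interval.

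A single-step jump of size exceeding $4n/9$ means that at the critical pair $(u_0,v_0)$ the shortest path is not unique: there exist two otherwise-disjoint shortest paths $\spath_0$ and $\spath_0'$ from $u_0$ to $v_0$, and the closed region they enclose contains more than $4n/9$ points of~$S$. This enclosed region decomposes at the interior intersections of $\spath_0$ and $\spath_0'$ into ``bigons'' $R_1,\ldots,R_k$, each bounded by two sub-shortest-paths between consecutive intersections $u_i$ and $v_i$. If some $R_i$ already satisfies $2n/9 \leq |R_i \cap S| \leq 2n/3$, I turn it into an sp-triangle by inserting a third vertex $w_0$ on one of its two bounding paths, so that its three sides $\spath(u_i,v_i)$, $\spath(u_i,w_0)$, and $\spath(w_0,v_i)$ are all sub-shortest-paths.

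The main obstacle is the residual case, in which one bigon $R_1$ alone contains more than $2n/3$ points of~$S$. Here I would slide an auxiliary vertex $w_0$ along one of the two bounding shortest paths of $R_1$, which splits $R_1$ into two sp-triangles whose point counts change in unit steps as $w_0$ moves. A discrete intermediate-value argument then yields a placement of $w_0$ for which one of these sp-triangles contains between $2n/9$ and $2n/3$ points. The delicate technical core is verifying that every region produced during these steps is genuinely bounded by shortest paths of~$\T$ (so each candidate separator qualifies as a valid sp-separator), and carefully handling a jump of~$f$: in particular, showing that the two limiting shortest paths $\spath_0,\spath_0'$ are well-defined and that the more than $4n/9$ points that switch sides indeed lie in the region they enclose.
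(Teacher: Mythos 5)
Your opening moves match the paper's: sweep $v$ along $\bd\T$, and if the count $|\spath^+(u,v)\cap S|$ jumps over the window $[2n/9,2n/3]$, extract the region enclosed by the two coexisting shortest paths at the breakpoint and reduce to the case of a single component containing more than $4n/9$ points. The divergence --- and the gap --- is in your residual case. You propose to slide a vertex $w_0$ along one bounding path of the heavy region and invoke a ``discrete intermediate-value argument'' on the grounds that ``point counts change in unit steps as $w_0$ moves.'' Two things go wrong here. First, a point $w_0$ on one bounding shortest path does not by itself split the region into two sp-triangles: you need connecting paths from $w_0$ across the region that are themselves shortest paths of $\T$ and that stay inside the region, and a shortest path from $w_0$ to the opposite corner may well leave the region entirely. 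This is not a routine verification; the paper devotes a separate lemma (Lemma~\ref{le:stay-inside}) to showing that from points on the curve $B_u$ (the frontier of the set $Z_u$ of points admitting a ``good path'' to the corner $u$) there exist shortest paths to \emph{all three} corners staying inside the triangle, and the moving vertex must be constrained to $B_u$, not to a side. Second, and more fundamentally, the unit-step claim is false: as the root of the shortest-path tree moves, the tree changes continuously only between breakpoints, and at a breakpoint the enclosed count can drop by more than $4n/9$ in one step --- exactly the same phenomenon that broke the first sweep. So the intermediate-value argument cannot close the case on its own.

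The paper's resolution is an iterated shrinking process $\Delta_0\supset\Delta_1\supset\cdots$ with the invariant $|\Delta_i\cap S|\geq 2n/9$, where each step either succeeds outright or, upon a large jump, selects a region $R$ between the two competing shortest paths, picks a data point $p\in\myint(R)\cap S$, and splits $R$ into two sp-triangles by shortest paths from $p$ to the endpoints $q_1,q_2$ of $R$; the key is that $p$ then lies on the boundary of the chosen sub-triangle, so the number of \emph{interior} points strictly decreases, which is what guarantees termination. Your proposal contains neither this recursion nor a termination argument, and the step you do propose would fail at the first breakpoint of the sliding vertex. You correctly flag ``paths staying inside the region'' as delicate, but that is only half of what is missing.
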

To prove the theorem we first try to find a balanced sp-separator of
the first type. If this fails we argue that a suitable sp-triangle exists.
\medskip

Let $u\not\in S$ be an arbitrary point on $\bd\T$. Now move a point~$v$ around $\bd\T$,
starting at $u$ and traversing $\bd\T$ counterclockwise, until $v$
reaches~$u$ again. As we continuously move $v$ along~$\bd\T$, the shortest
path~$\spath(u,v)$ also changes continuously, except at certain \emph{breakpoints}.
More precisely, we can partition $\bd\T$ into finitely many pieces---the breakpoints
are the endpoints of these pieces---such that as $v$ moves along one such a boundary piece,
we can deform~$\spath(u,v)$ continuously. Initially, when $v$ is still
infinitesimally close to $u$, we have $\spath^+(u,v)\cap S = \emptyset$;
at the very end, when $v$ approaches $u$ again, we have $\spath^+(u,v)\cap S = S$.

If at some point during the traversal $v$ reaches a position such that
$2n/9\leq |\spath^+(u,v) \cap S| \leq 2n/3$ then we have found our balanced
sp-separator. Otherwise there is a breakpoint~$v^*$ at which $\spath(u,v)$ jumps
over more than $2n/3-2n/9=4n/9$ points from~$S$. In this case there are two shortest
paths $\spath(u,v^*)$ and $\spath'(u,v^*)$ such that the (open) region $R$
enclosed by $\spath(u,v^*)$ and $\spath'(u,v^*)$ contains more than $4n/9$ points from~$S$.

Note that $R$ may consist of more than one connected components. If all of them contain
at most $4n/9$ points from $S$, then we can obtain a balanced separator by only
jumping over a subset of the components. Otherwise there is a single component, $R_i$,
that contains more than $4n/9$ points. Let $u_0$ and $v_0$ be the two points on
$\bd R_i$ where $\spath(u,v^*)$ and $\spath'(u,v^*)$ meet. Let $w_0$ be an arbitrary
point on $\bd R_i$ that is distinct from $u_0$ and $v_0$; see Fig.~\ref{fi:sp-separator}(ii).
Then the triple $u_0,v_0,w_0$, together with $\bd R_i$, defines an sp-triangle
$\Delta_0$ containing at least $4n/9$ points from~$S$.
Next we show how to construct a sequence of sp-triangles
$\Delta_0 \supset \Delta_1 \supset \cdots \supset \Delta_k$ such that
$2n/9 \leq |\Delta_{k}\cap S| \leq 2n/3$.
We will maintain the invariant that $|\Delta_i\cap S|\geq 2n/9$.
Note that this is indeed satisfied for~$\Delta_0$.
In the following, we denote the vertices of the sp-triangle~$\Delta_i$ by $u_i,v_i,w_i$,
its interior by $\myint(\Delta_i)$, and its boundary by $\bd\Delta_i$.
\medskip

Suppose we have
constructed $\Delta_i$. If $|\Delta_i\cap S|\leq 2n/3$ then $\Delta_i$ is the final
triangle in our construction and we are done.
If $\Delta_i$ is a degenerate sp-triangle, then we can immediately
find an sp-triangle $\Delta_{i+1}\subset \Delta_i$ with the required
properties: we just take a subpath containing $2n/9$ points.
It remains to handle the case where $\Delta_i$ is a non-degenerate sp-triangle
containing more than $2n/3$ points. Note that if $\Delta_i$ has a side
containing at least $2n/9$ points from $S$ we can again finish the
construction by taking a suitable subpath of this side as our next (and final) sp-triangle.
Hence, we can assume that each side contains less than $2n/9$ points,
which implies there is at least one point---actually, at least four
points---in the interior of~$\Delta_i$.
Next we show how to construct an sp-triangle~$\Delta_{i+1}\subset \Delta_i$ containing
at least~$2n/9$ points such that either
$|\Delta_{i+1}\cap S| < |\Delta_{i}\cap S|$ or
$|\myint(\Delta_{i+1})\cap S| < |\myint(\Delta_{i})\cap S|$.
Note that the condition that $|\Delta_{i+1}\cap S| < |\Delta_{i}\cap S|$ or
$|\myint(\Delta_{i+1})\cap S| < |\myint(\Delta_{i})\cap S|$ implies that our process terminates.
Indeed, when $|\myint(\Delta_{i+1})\cap S|=0$ and $\Delta_{i+1}$ contains
more than $2n/3$ points, then we have a side with more than~$2n/9$ points
and so we can finish the construction as described above.


To simplify the notation, we will drop the subscript~$i$ from now on. Thus we
are given a non-degenerate sp-triangle $\Delta$ with corners $u,v,w$ that contains
more than $2n/3$ points from $S$ and has at least one point in its interior.
For a point $z\in \Delta$ we call a path from $z$ to one of the corners of
$\Delta$ a \emph{good path} if it is a shortest path that stays within~$\Delta$
(although not necessarily in its interior).
Define $Z_u\subseteq \Delta$ to be the set of all points $z\in\Delta$ such that
there is a good path $\spath(z,u)$ to the corner~$u$.
The region~$Z_u$ is simply connected and its boundary
consists of the sides $\spath(u,v)$ and $\spath(u,w)$, and a
curve $B_u$ from $v$ to $w$---see Fig.~\ref{fi:def-Bu}(i). Note that~$B_u$
may overlap partially (or fully) with $\spath(u,v)$ and/or $\spath(u,w)$.
\begin{figure}
\begin{center}
\includegraphics{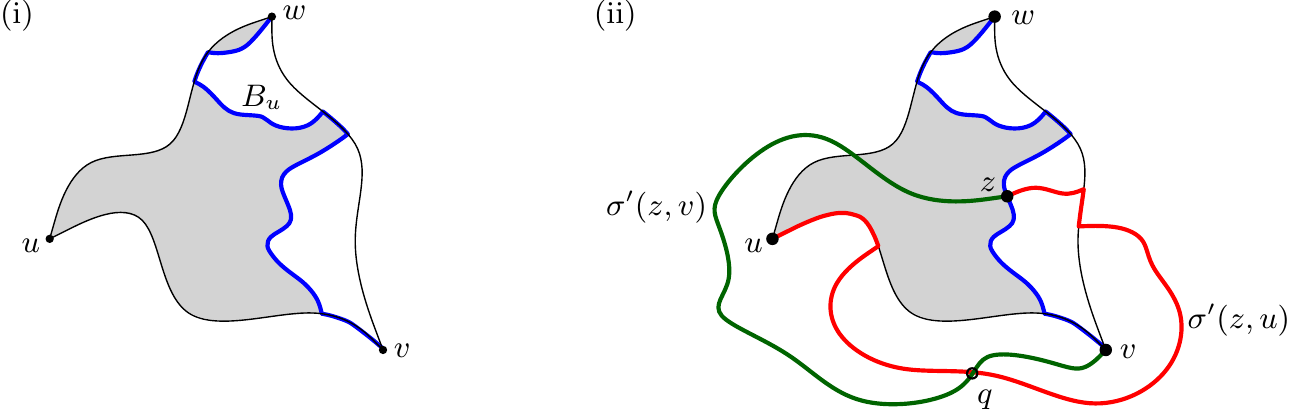}
\end{center}
\caption{(i) The blue curve is $B_u$. The interior of the region~$Z_u$ is indicated
in grey; the curve $B_u$ is also part of~$Z_u$. (ii) The paths $\sigma'(z,u)$ and $\sigma'(z,v)$
intersect.}
\label{fi:def-Bu}
\end{figure}
\begin{lemma}\label{le:stay-inside}
For any point $z\in B_u$, there are good paths $\spath(z,u)$, $\spath(z,v)$,
and $\spath(z,w)$ to the three corners of $\Delta$.
\end{lemma}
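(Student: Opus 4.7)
My plan is to show the lemma by establishing that every $z\in B_u$ satisfies $z\in Z_v$; the argument for $Z_w$ is entirely symmetric (the roles of $v$ and $w$ can be interchanged by renaming), and $z\in Z_u$ is immediate from $B_u\subseteq Z_u$. I would fix an arbitrary shortest path $\spath'(z,v)$ on $\T$; if $\spath'(z,v)\subseteq \Delta$ we are done, so I assume it exits $\Delta$ and let $p$ be the first point at which it crosses $\bd\Delta=\spath(u,v)\cup\spath(v,w)\cup\spath(w,u)$.

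I would then split into sub-cases according to which side of $\Delta$ contains $p$. The easy sub-cases are $p\in\spath(u,v)$ and $p\in\spath(v,w)$: in both, the endpoint $v$ lies on the same side as $p$, so I can replace the sub-path of $\spath'(z,v)$ from $p$ to $v$ by the sub-path of that side from $p$ to $v$. Because each side of $\Delta$ is itself a shortest path on $\T$, the replacement has length $\dist(p,v)$, exactly matching what was removed; the modified path stays in $\Delta$ and still has total length $\dist(z,v)$, so it is a good path.

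The main obstacle is the remaining sub-case $p\in \spath(w,u)$, where no side at $p$ directly reconnects to $v$ inside $\Delta$. Here I would exploit that $z$ lies on $B_u\subseteq \bd Z_u$. Since there are points in $\Delta\setminus Z_u$ arbitrarily close to $z$, continuity of shortest paths produces an alternative shortest path $\spath'(z,u)$ from $z$ to $u$ that exits $\Delta$, in addition to the good path $\spath(z,u)\subseteq \Delta$; this is precisely the configuration in Figure~\ref{fi:def-Bu}(ii). Both $\spath'(z,u)$ and $\spath'(z,v)$ leave $\Delta$ through $\spath(w,u)$. I claim that they must cross at a second point $q\neq z$: this uses the simple-connectedness of $\Delta$ together with the fact that the good path $\spath(z,u)$ splits the interior of $\Delta$ into two sub-regions whose closures contain $v$ and $w$ respectively, so a curve leaving $\Delta$ through $\spath(w,u)$ and returning to reach $v$ must cross one leaving through $\spath(w,u)$ and returning to reach $u$. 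Once such a $q$ is found, a standard swap argument (the two sub-paths from $z$ to $q$ have equal length $\dist(z,q)$) produces a new shortest path from $z$ to $v$ whose initial portion follows $\spath'(z,u)$ towards $u$; I would then argue that this swapped shortest path first exits $\Delta$ through $\spath(u,v)$ rather than $\spath(w,u)$, reducing the situation to one of the easy sub-cases.

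The hard part will be making the topological crossing claim fully rigorous on a polyhedral terrain, confirming that $\spath'(z,u)$ and $\spath'(z,v)$ must truly cross at a point $q$ rather than merely bound a common sub-region outside $\Delta$ without intersecting. I expect this step to follow from the simple-connectedness of $\Delta$ and the separating role of $\spath(z,u)$, together with the usual fact that two shortest paths from a common source that share a second point coincide up to that point (so the swap is well-defined and preserves length).
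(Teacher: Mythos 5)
Your two easy sub-cases (first exit through $\spath(u,v)$ or through $\spath(v,w)$, shortcut along that side to $v$) are correct, and your overall plan is in the spirit of the paper's proof, but the hard sub-case has genuine gaps. First, you assert without justification that the alternative path $\spath'(z,u)$ exits $\Delta$ through $\spath(w,u)$. The paper makes the opposite normalization, and for good reason: a shortest path to $u$ that first exits through a side \emph{incident to} $u$ can always be shortcut along that side into a good path, so the exterior path that genuinely witnesses $z\in\bd Z_u$ is one that exits through the opposite side $\spath(v,w)$ and only later re-enters along a side incident to $u$. That normalization is precisely what makes the crossing argument work: such a path, within $\myext(\Delta)$, separates $v$ (or $w$) from $u$, so an exterior shortest path to $v$ is forced to meet it. In your configuration, where both $\spath'(z,u)$ and $\spath'(z,v)$ leave through $\spath(w,u)$, no crossing is forced: the two exterior arcs can leave through nested points of $\spath(w,u)$ and run around $\bd\Delta$ without intersecting. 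Your supporting topological claim is also false as stated: $z$ may lie in $\myint(\Delta)$ (the curve $B_u$ is generally interior to $\Delta$ away from $v$ and $w$), and an arc from an interior point of a disk to a boundary point does not disconnect the disk, so $\spath(z,u)$ does not split $\myint(\Delta)$ into two sub-regions.

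Your final reduction is also internally inconsistent: after the swap, the new shortest path to $v$ begins with the initial portion of $\spath'(z,u)$ and therefore first exits $\Delta$ through whichever side $\spath'(z,u)$ first exits through --- which you took to be $\spath(w,u)$, not $\spath(u,v)$ --- so you would land back in the hard case. With the paper's normalization the difficulty dissolves: $\spath'(z,u)$ first leaves $\Delta$ at some $x\in\spath(v,w)$, the crossing point $q$ with $\spath'(z,v)$ lies beyond $x$ and satisfies the equal-distance property, and the path that follows $\spath'(z,u)$ from $z$ to $x$ and then the side $\spath(v,w)$ from $x$ to $v$ has length at most $\distt(z,v)$ while staying in $\Delta$. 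I recommend restructuring the hard case around that normalization rather than around the first exit side of $\spath'(z,v)$.
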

\begin{proof}
By definition of $B_u$, there is a good path from $z$ to~$u$.
If $z$ lies on $\spath(v,w)$, the side of $\Delta$ opposite~$u$, then we also
have good paths to $v$ and~$w$. Now assume $z\not\in \spath(v,w)$.

Since $z\in \bd Z_u$ and $z\not\in \spath(v,w)$ we have not only a good
path from $z$ to~$u$, but we also have a shortest path $\spath'(z,u)$ that
does not stay inside $\Delta$ and that cannot be shortcut (while maintaining
its length) to do so. Note that as soon as $\spath'(z,u)$ exits $\Delta$ through one of the
sides $\spath(u,v)$ or $\spath(u,w)$ it could also follow that side to~$u$.
Hence, we can assume $\spath'(z,u)$ is as follows: it exits $\Delta$ through
$\spath(v,w)$ (possibly after following $\spath(v,w)$ for a while),
then it moves through $\myext(\Delta)$ until it hits one of the sides
incident to~$u$, say $\spath(u,w)$, which it then follows to~$u$.
See~Fig.~\ref{fi:def-Bu}(ii). (The portion along $\spath(u,w)$ may be empty.)
Note that within $\myext(\Delta)$ the path $\spath(z,u)$ separates~$u$
either from $v$ or from $w$. Assume without loss of generality that the former is the
case, as in Fig.~\ref{fi:def-Bu}(ii). We now argue the existence of good paths
$\spath(z,v)$ and $\spath(z,w)$.

First consider a shortest path $\spath'(z,v)$. If $\spath'(z,v)$ already stays inside $\Delta$
we have a good path to~$v$. Otherwise it must go through~$\myext(\Delta)$ and, hence,
cross $\spath'(z,u)$ at a point~$q$. Note that the distances from $z$ to $q$ along
$\spath'(z,u)$ and along $\spath'(z,v)$ must be equal. But then the path from $z$
to $v$ that follows $\spath'(z,u)$ until it hits the side $\spath(v,w)$ and then
follows that side to~$v$ cannot be longer than~$\spath'(z,v)$.
Hence, we have a good path from $z$ to $v$.

Now consider a shortest path $\spath'(z,w)$. If it doesn't already stay inside~$\Delta$,
it exists $\Delta$ through the side~$\spath(u,v)$ and it crosses~$\spath'(z,u)$.
In the latter case we can use the same argument as above, and find a good path to~$w$.
\end{proof}
Now imagine moving a point $z$ from $w$ to $v$ along~$B_u$. By the previous lemma,
at any point we have good paths $\spath(z,u)$ and $\spath(z,v)$.
Now consider a \emph{shortest-path tree}\footnote{The two shortest paths
$\spath(z,u)$ and $\spath(z,v)$ may not form a tree because they meet more than
once, but by re-routing we can always get rid of this situation.}
$\tree_z$ with $z$ as the root and $u$ and $v$ as leaves
that consists of good paths $\spath(z,u)$ and $\spath(z,v)$ to~$u$ and~$v$---see Fig.~\ref{fi:sp-tree}(i).
\begin{figure}
\begin{center}
\includegraphics{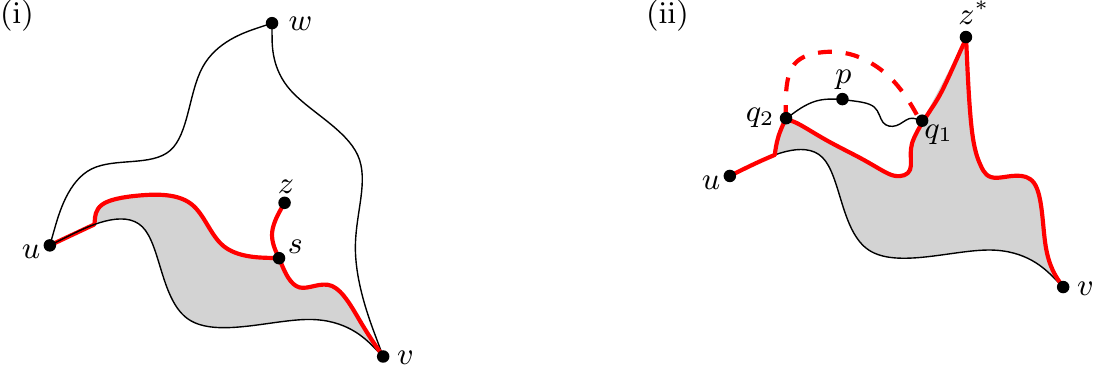}
\end{center}
\caption{(i) A good-path tree (in red) with root $z$ and leaves $u$ and $v$. The interior of
$\Delta_z$ is indicated in grey. (ii) A subpath of the shortest path~$\spath(z,u)$ jumps at~$z$
from the dashed subpath to the solid subpath from $q_1$ to $q_2$. Region~$R$ is split into
two sp-triangles by connecting $p$ to $q_1$ and $q_2$ by shortest paths.}
\label{fi:sp-tree}
\end{figure}
When $z=w$ we take $\spath(z,u)$ and $\spath(z,v)$ to be the sides
$\spath(w,u)$ and $\spath(w,v)$ of $\Delta$. When $z=v$
we take $\spath(z,u)$ to be the side $\spath(v,u)$; the path $\spath(z,v)$ is then empty.
Let $\Delta_z$ denote the sp-triangle with $v$ as one of its corners that is
bounded by (part of) $\tree_z$ and (part of) the side $\spath(u,v)$.

As we continuously move $z$ along~$B_u$, the good-path tree~$\tree_z$ also changes continuously,
except at certain breakpoints.  Initially, when $z=w$, we have
$\tree_z=\spath(w,u)\cup \spath(w,v)$ and so $\Delta_z =\Delta$.
At the end, when $z=v$, we have $\tree_z=\spath(u,v)$ and so
$\myint(\Delta_z)$ is empty. (Thus $\Delta$ is a degenerate sp-triangle.)
Now consider the first moment when either $|\Delta_z \cap S|$ decreases
or $|\myint(\Delta_z) \cap S|$ decreases. Let $z^*$ be the point at which this happens.
We have two cases.
\myitemize{
\item If $|\Delta_{z^*}\cap S|  \geq 2n/9$ then we can take $\Delta_{i+1} := \Delta_{z^*}$.
\item Otherwise, the number of points we just lost is more than $2n/3 - 2n/9=4n/9$.
      This can happen when one of the two good paths forming $\tree_z$ jumps.
      In this case~$z^*$ must be a breakpoint, and there are
      two different shortest paths from $z^*$ to $u$, or two different shortest paths
      from $z^*$ to~$v$ (or both).
      Assume we have two shortest paths from $z^*$ to~$u$. The points that we lose as $z$ moves
      over the breakpoint are located in between these two paths. The area
      between the two paths may consist of various regions. If one of these regions
      contains at least one and at most $4n/9$ points, then we can find a suitable sp-triangle $\Delta_{i+1}$ by only jumping over this region. Otherwise we have a region~$R$
      that contains more than~$4n/9$ points. This region is bounded by two
      shortest paths that meet at shared endpoints, $q_1$ and $q_2$. We then take
      any point $p\in \myint(R)\cap S$, and connect $p$ by shortest paths that
      stay in $R$ to $q_1$ and $q_2$; see Fig.~\ref{fi:sp-tree}(ii).
      This partitions $R$ into two sp-triangles. At least one of them
      contains more than $2n/9$ points. We take this sp-triangle to be~$\Delta_{i+1}$.
      Note that $\Delta_{i+1}$ contains less points in its interior than
      $\Delta$, since $p\not\in\myint(\Delta_{i+1})$.
}
In both cases we find an sp-triangle with the required properties, thus
finishing the proof of Theorem~\ref{th:balanced-triangle}. Next we describe
how to use this theorem to compute a spanner for a set $S$ of points
on a terrain.

\paragraph{The spanner construction.}
Next we show how to compute a spanner~$\graph(S)=(S,E_S)$ for
a set $S$ of $n$ points on a terrain~$\T$. We first describe how to obtain a
$(6+\eps)$-spanner, then we show how to improve the construction
to reduce the spanning ratio to~$(2+\eps)$.
\medskip

Let $\eps>0$ be a given constant.
If $|S|\leq 3$ we connect all pairs of points in $S$, that is,
$\graph(S)$ is the complete graph. $|S|> 3$ we proceed as follows.
\begin{enumerate}
\item Take a balanced sp-separator, as in Theorem~\ref{th:balanced-triangle}.
      If the separator is a shortest path $\spath(u,v)$ connecting points
      $u,v\in\bd\T$, then define $\Sin := \spath^+(u,v) \cap S$; if the separator
      is an sp-triangle $\Delta$ then define $\Sin := \Delta\cap S$.
      If the first case applies, we will from now on call the shortest path $\spath(u,v)$
      the \emph{side} of the separator. Thus a separator has one or three sides.
      Note that a side is always a shortest path on $\T$.
      Set $\Sout := S\setminus \Sin$. 
\item Process each side $\spath$ of the separator as follows. \label{step:merge-edges}
      \myenumerate{
      \item[(i)] For each $p\in S$, let $p_{\sigma} \in \sigma$ be a point whose geodesic
            distance $\distt(p,p_\sigma)$ to $p$ is minimum. Assign each point
            $p_{\sigma}$ a weight $\weight{p_{\sigma}} := \distt(p,p_\sigma)$ and
            define $S_{\sigma} := \{ p_{\sigma} : p\in S \}$ to be the resulting weighted
            (multi-)set.
      \item[(ii)] We view $\sigma$ as a 1-dimensional Euclidean space, and $S_\sigma$ as a
            weighted point set in this space. (Note that because $\sigma$ is a shortest
            path on $\T$, distances $d_{\sigma}$ in the 1-dimensional space~$\sigma$
            are the same as distances $\distt$ on~$\T$.) We now construct an additively weighted
            $(2+\eps_1)$-spanner $\graph_{\sigma}$ for $S_{\sigma}$, where $\eps_1=\eps/3$,
            using the method from Theorem~\ref{th:additive-spanner}.
      \item[(iii)] For each edge $(p_\sigma,q_\sigma)$ in the spanner $\graph_{\sigma}$,
            we add $(p,q)$ to $E(S)$.
      }
\item Recursively compute spanners $\graph(\Sin)=(\Sin,E_{\myin})$ and
      $\graph(\Sout)=(\Sout,E_{\myout})$, and add the edge sets $E_{\myin}$ and
      $E_{\myout}$ to $E$.
\end{enumerate}
\begin{lemma}\label{le:ratio6}
The construction above gives a $(6+\eps)$-spanner with respect to the geodesic distance.
The spanner has $O(c_{\eps}n\log n)$ edges, where $c_{\eps}$ is a constant depending on~$\eps$.
\end{lemma}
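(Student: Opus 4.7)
The plan is to prove the spanning-ratio bound by induction on $n=|S|$, and the edge-count bound by a standard recurrence driven by the balanced separator. The base case $|S|\leq 3$ is immediate since $\graph(S)$ is then the complete graph. For the inductive step, fix two points $p,q\in S$: if $p$ and $q$ lie in the same part ($\Sin$ or $\Sout$), the recursively constructed spanner already provides a path of length at most $(6+\eps)\,\distt(p,q)$ by the induction hypothesis. The interesting case is $p\in\Sin$ and $q\in\Sout$.

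In that case, any geodesic shortest path from $p$ to $q$ on $\T$ must cross at least one side $\sigma$ of the sp-separator at some point $x$, with $\distt(p,x)+\distt(x,q)=\distt(p,q)$. Let $p_\sigma,q_\sigma\in\sigma$ be the nearest points on $\sigma$ to $p$ and $q$. Then $\weight{p_\sigma}=\distt(p,p_\sigma)\leq\distt(p,x)$ and $\weight{q_\sigma}\leq\distt(q,x)$. Because $\sigma$ is itself a shortest path on $\T$, distances along $\sigma$ agree with $\distt$, so the triangle inequality yields $|p_\sigma q_\sigma|=\distt(p_\sigma,q_\sigma)\leq\weight{p_\sigma}+\distt(p,q)+\weight{q_\sigma}\leq 2\,\distt(p,q)$. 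Summing,
\[
\disto(p_\sigma,q_\sigma)=\weight{p_\sigma}+|p_\sigma q_\sigma|+\weight{q_\sigma}\leq 3\,\distt(p,q).
\]
By Theorem~\ref{th:additive-spanner}, $\graph_\sigma$ contains a path $p_\sigma=r_\sigma^{(0)},\ldots,r_\sigma^{(k)}=q_\sigma$ of additively weighted length at most $(2+\eps/3)\,\disto(p_\sigma,q_\sigma)\leq(6+\eps)\,\distt(p,q)$. The key step is that for every edge $(r_\sigma,s_\sigma)$ used by this path, step~\ref{step:merge-edges} added $(r,s)$ to $\graph$, and the triangle inequality gives $\distt(r,s)\leq\weight{r_\sigma}+|r_\sigma s_\sigma|+\weight{s_\sigma}=\disto(r_\sigma,s_\sigma)$. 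Chaining through the corresponding walk $p=r^{(0)},\ldots,r^{(k)}=q$ in $\graph$ then gives $\distg(p,q)\leq(6+\eps)\,\distt(p,q)$, closing the induction.

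For the edge count, each recursive call adds at most three additively weighted spanners (one per side), each with $O(c_\eps n)$ edges by Theorem~\ref{th:additive-spanner}, contributing $O(c_\eps n)$ edges at this level of recursion. Theorem~\ref{th:balanced-triangle} forces $|\Sin|,|\Sout|\leq 7n/9$ with $|\Sin|+|\Sout|=n$, so the recurrence $T(n)\leq T(n_1)+T(n_2)+O(c_\eps n)$ with $n_1+n_2=n$ and $\max(n_1,n_2)\leq 7n/9$ has solution $T(n)=O(c_\eps n\log n)$. The main subtlety to verify carefully is the per-edge inequality $\distt(r,s)\leq\disto(r_\sigma,s_\sigma)$: it is what makes the factor-of-$3$ blow-up of $\disto(p_\sigma,q_\sigma)$ over $\distt(p,q)$ propagate cleanly along the entire $\graph_\sigma$-path, without any further loss when the path is translated back to $\graph$.
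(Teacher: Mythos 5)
Your proposal is correct and follows essentially the same route as the paper: induct on the recursion, use the balanced separator for the $O(c_\eps n\log n)$ recurrence, and in the cross case bound $\distso(p_\sigma,q_\sigma)\leq 3\,\distt(p,q)$ via the crossing point and the triangle inequality before invoking the $(2+\eps/3)$ additively weighted spanner on $S_\sigma$. The per-edge inequality $\distt(r,s)\leq\disto(r_\sigma,s_\sigma)$ that you single out is exactly the observation the paper uses to transfer the path from $\graph_\sigma$ back to $\graph$, so nothing is missing.
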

\begin{proof}
By Theorem~\ref{th:additive-spanner} the number of edges we add to the spanner in
Step~\ref{step:merge-edges} is $O(n)$.
Hence, if $A(n)$ denotes the total number of edges we have in our spanner on $n$
points, then we have $A(n) = O(c_\eps n) + A(n_1) + A(n_2)$, where $n_1+n_2=n$ and
$n_1,n_2\leq 7n/9$ and $c_\eps = (2+1/\eps)^{O(\log(1/\eps))}n$. Hence, $A(n) = O(c_\eps n\log n)$ as claimed.

Let $p,q$ be two arbitrary points in $S$. If both points are in $\Sin$ or both points are
in $\Sout$ then we have a $(2+\eps_1)$-path between them by induction. So assume
$p\in \Sin$ and $q\in \Sout$. Let $\spath(p,q)$ be a shortest path on $\T$ from~$p$ to $q$.
Let $\sigma$ be a side of $\Delta$ intersected by $\spath(p,q)$.
Consider the points $p_{\sigma}$ and $q_{\sigma}$. Then there is a path
$\pi$ in $\graph_{\sigma}$ of length at most $(2+\eps_1)\cdot \distso(p_\sigma,q_\sigma)$,
where $\distso$ denotes the additively weighted distance in the 1-dimensional space~$\sigma$. 
The same path, with each
point $x_{\sigma}$ replaced by its original $x\in S$, also exists in our
spanner~$\graph$. Note that for any two points $x,y\in S$ we have
\[
\begin{array}{lll}
\distt(x,y) & \leq & \distt(x,x_{\sigma}) + \distt(x_\sigma,y_\sigma) + \distt(y_{\sigma},y) \\[1mm]
            &   =  & \weight{x_\sigma} + d_{\sigma}(x_\sigma,y_\sigma) + \weight{y_\sigma} \\[1mm]
            &   =  & \distso(x,y).
\end{array}
\]
Let $r$ be a point in $\sigma\cap \spath(p,q)$. Then $\distt(p,q) = \distt(p,r)+\distt(r,q)$.
We also have $\distt(p,p_\sigma) \leq \distt(p,r)$ and $\distt(q,q_\sigma) \leq \distt(q,r)$
by definition of $p_\sigma$ and $q_\sigma$. Hence,
\begin{eqnarray}
\distg(p,q) & \leq & \mbox{length}(\pi) \nonumber \\[1mm]
            & \leq & (2+\eps_1) \cdot \distso(p_\sigma,q_\sigma) \label{eq:2to6} \\[1mm]
            & \leq & (2+\eps_1) \cdot \left( \weight{p_\sigma} +   d_{\sigma}(p_\sigma,q_\sigma) + \weight{q_\sigma} \right) \nonumber \\[1mm]
            &   =  & (2+\eps_1) \cdot
            \left( \distt(p,p_\sigma) + \distt(p_\sigma,q_\sigma) + \distt(q,q_\sigma) \right) \nonumber \\[1mm]
            & \leq & (2+\eps_1) \cdot
            \left( \distt(p,r) + \distt(p_\sigma,q_\sigma) + \distt(q,r) \right) \nonumber \\[1mm]
            &   =  & (2+\eps_1) \cdot
            \left( \distt(p,q) + \distt(p_\sigma,q_\sigma) \right) \nonumber
\end{eqnarray}
Moreover,
\[
\begin{array}{lll}
\distt(p_\sigma,q_\sigma) & \leq & \distt(p_\sigma,p) + \distt(p,q) + \distt(q,q_\sigma) \\[1mm]
                          & \leq & \distt(r,p) + \distt(p,q) + \distt(q,r) \\[1mm]
                          &  =   & 2 \; \distt(p,q).
\end{array}
\]
Thus $\distg(p,q) \leq (2+\eps_1) \cdot 3 \; \distt(p,q) = (6+\eps)\cdot \distt(p,q)$.
\end{proof}
We now refine our construction to reduce the spanning ratio to~$2+\eps$.
The idea behind the improvement is as follows. As follows from the proof
of Lemma~\ref{le:ratio6}, we would get a $(2+\eps_1)$-spanner if we had
$\distso(p_\sigma,q_\sigma) = \distt(p,q)$. In the above construction, however,
we have $\distso(p_\sigma,q_\sigma) \leq 3\;\distt(p,q)$ giving a $(6+\eps)$-spanner.
We can obtain $\distso(p_\sigma,q_\sigma) \leq (1+\eps_2) \cdot \distt(p,q)$,
for a suitable $\eps_2=O(\eps)$,
by modifying Step~\ref{step:merge-edges}  as follows.

In Step~\ref{step:merge-edges}(i) we take for each~$p\in S$ not a single point $p_\sigma\in\sigma$
but a collection $S(p,\sigma)$ defined as follows. As before, let $p_\sigma$ be a point
on $\sigma$ that is closest to~$p$. Let $\sigma(p) \subseteq \sigma$ be the set of points
on $\sigma$ whose distance to $p_\sigma$ is at most $(1+2/\eps_2)\cdot\distt(p,p_\sigma)$, that is,
\[
\sigma(p) := \{ x\in \sigma : \distso(p_\sigma,x) \leq (1+2/\eps_2)\cdot\distt(p,p_\sigma) \}.
\]
We partition $\sigma(p)$ into $O(1/\eps_2^2)$ pieces $\sigma_i(p)$, each of length
at most~$\eps_2 \cdot \distt(p,p_\sigma)$. For each such piece $\sigma_i(p)$, let
$p_\sigma^{(i)}$ a point on $\sigma_i(p)$ that is closest to~$p$. We now take
$S(p,\sigma)$ to be the set of all such points~$p_\sigma^{(i)}$, where we
set $\weight{p_\sigma^{(i)}} := \distt(p,p_\sigma^{(i)})$.
Note that $|S(p,\sigma)| = O(1/\eps_2^2)$.

In Step~\ref{step:merge-edges}(ii) we now compute a $(2+\eps_1)$ spanner $\graph_\sigma$
on the set $\bigcup_{p\in S} S(p,\sigma)$. In Step~\ref{step:merge-edges}(iii) we then add
for each edge $(p_\sigma^{(i)},q_\sigma^{(j)})$ in $\graph_\sigma$ the edge~$(p,q)$
to~$E(S)$. This leads to the following result.
\begin{theorem}\label{th:terrain-spanner}
Let $S$ be a set of $n$ points on a polyhedral terrain~$\T$ in $\Reals^3$, and let $\eps>0$
be a fixed constant. Then there exists  a $(2+\eps)$-spanner with
$O(c_{\eps}n\log n)$ edges with respect to the geodesic distance, where $c_{\eps}$
is a constant depending on~$\eps$.
\end{theorem}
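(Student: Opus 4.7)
The plan is to mimic the proof of Lemma~\ref{le:ratio6} while leveraging the refined sampling $S(p,\sigma)$ to reduce the spanning ratio from~$6+\eps$ to~$2+\eps$. The edge count is essentially unchanged: the weighted set $\bigcup_{p\in S} S(p,\sigma)$ has $O(n/\eps_2^2)$ points in the one-dimensional space~$\sigma$, so Theorem~\ref{th:additive-spanner} still produces only $O(c_\eps n)$ edges per side of the separator. Combined with the balanced-separator guarantee of Theorem~\ref{th:balanced-triangle}, the recurrence $A(n)=O(c_\eps n)+A(n_1)+A(n_2)$ with $n_1,n_2\leq 7n/9$ solves to $A(n)=O(c_\eps n\log n)$, as claimed.

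For the stretch, we argue by induction exactly as in Lemma~\ref{le:ratio6}; the only nontrivial case is $p\in\Sin$, $q\in\Sout$, where a shortest terrain path $\spath(p,q)$ crosses some side $\sigma$ of the separator at a point~$r$, so that $\distt(p,q)=\distt(p,r)+\distt(r,q)$. The key claim I need to establish is that there exist representatives $p'\in S(p,\sigma)$ and $q'\in S(q,\sigma)$ with
\[
\distso(p',q')\;\leq\;(1+O(\eps_2))\cdot\distt(p,q).
\]
Granting this, the same derivation that previously produced Inequality~(\ref{eq:2to6}) now yields $\distg(p,q)\leq (2+\eps_1)(1+O(\eps_2))\distt(p,q)$; choosing $\eps_1,\eps_2=\Theta(\eps)$ sufficiently small makes this at most $(2+\eps)\distt(p,q)$.

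To prove the claim I do a small case analysis on whether $r\in\sigma(p)$ and $r\in\sigma(q)$. When $r\in\sigma(p)$, the partition of $\sigma(p)$ into pieces of length at most $\eps_2\distt(p,p_\sigma)$ provides a point $p'=p_\sigma^{(i)}$ with $d_\sigma(p',r)\leq\eps_2\distt(p,p_\sigma)\leq\eps_2\distt(p,r)$ and $\distt(p,p')\leq\distt(p,r)$. When $r\notin\sigma(p)$, the defining inequality $d_\sigma(p_\sigma,r)>(2/\eps_2)\distt(p,p_\sigma)$ combined with $d_\sigma(p_\sigma,r)\leq\distt(p,p_\sigma)+\distt(p,r)$ forces $\distt(p,p_\sigma)=O(\eps_2)\distt(p,r)$, so the original representative $p'=p_\sigma$ is itself essentially on top of~$r$ from $p$'s perspective. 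The symmetric dichotomy applies to~$q$. Expanding $\distso(p',q')=\distt(p,p')+d_\sigma(p',q')+\distt(q,q')$ and bounding $d_\sigma(p',q')\leq d_\sigma(p',r)+d_\sigma(r,q')$, then summing the contributions in each of the four cases, delivers the desired $(1+O(\eps_2))\distt(p,q)$.

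The main obstacle I expect is precisely this case analysis: the length of $\sigma(p)$ scales with $\distt(p,p_\sigma)$ rather than with $\distt(p,q)$, so it is not a priori clear that the crossing point~$r$ is captured by the refined sampling. The bad-case observation above resolves this: whenever $r$ escapes $\sigma(p)$, the original representative $p_\sigma$ is already within an $\eps_2$-fraction of $\distt(p,r)$, making the refinement unnecessary in that regime. Once this bookkeeping is in place, the remainder is a routine substitution of the improved bound on $\distso$ into the chain of inequalities from Lemma~\ref{le:ratio6}.
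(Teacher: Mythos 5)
Your proposal is correct and follows essentially the same route as the paper: the same edge-count argument (a factor $O(1/\eps_2^2)$ blow-up absorbed into $c_\eps$), and the same key dichotomy on whether the crossing point $r$ lies in $\sigma(p)$, with the observation that when $r\notin\sigma(p)$ the defining inequality forces $\distt(p,p_\sigma)=O(\eps_2)\cdot\distt(p,r)$ so that $p_\sigma$ already suffices. The paper carries out exactly this case analysis to get $\distso(p',q')\leq(1+\eps_2)\distt(p,q)$ and then substitutes into Inequality~(\ref{eq:2to6}), just as you describe.
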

\begin{proof}
To prove the bound on the spanning ratio, we observe that as compared to our previous
construction the number of points for which we compute an additively weighted $(2+\eps_1)$-spanner
in Step~\ref{step:merge-edges}(ii) has increased from $O(n)$ to $(n/\eps_2^2)$.
Hence, if we set $\eps_2 = O(\eps)$ the overall number of edges increases by a factor $O(1/\eps^2)$.

It remains to prove the bound on the spanning ratio of our spanner~$\graph$.
To this end, consider two points $p\in \Sin$ and $q\in\Sout$. As in the
proof of Lemma~\ref{le:ratio6}, let $r$ be a point where the shortest path
$\sigma(p,q)$ crosses~$\sigma$.  If $r \not\in \sigma(p)$, set $p' := p_\sigma$. Otherwise, set $p'$ to be the closest point in $S(p,\sigma)$ to $r$.
Similarly define $q'$ for point $q$. Note that
\[
\begin{array}{lll}
\distso(p',q') & = & \weight{p'} +   d_{\sigma}(p',q') + \weight{q'} \nonumber \\[1mm]
                  & = &   \distt(p,p') +   d_{\sigma}(p',q') + \distt(q,q')  \nonumber  \\[1mm]
                  & \leq & \distt(p,p') +   d_{\sigma}(p',r) + d_{\sigma}(r,q')+ \distt(q,q').  \nonumber  \\[1mm]
\end{array}
\]
We next prove that $ \distt(p,p') +   d_{\sigma}(p',r) \leq (1+\eps_2)\cdot  \distt(p,r)$.
We have two cases:
\begin{itemize}
\item \emph{Case A:} $r \not\in \sigma(p)$.
In this case $p'= p_\sigma$ and $d_\sigma(p',r) > (1+2/\eps_2)\cdot \distt(p,p_\sigma)$.
Hence,
\[
\begin{array}{lll}
 \distt(p,p') +   d_{\sigma}(p',r) &\leq& \distt(p,p_\sigma) + (\distt(p,p_\sigma)+\distt(p,r)) \\[1mm]
                                               &\leq& 2\cdot \distt(p,p_\sigma) +  \distt(p,r) \\[1mm]
                                               &\leq& 2 \cdot (\eps_2/2) \cdot (d_\sigma(p',r)-\distt(p,p_\sigma)) +  \distt(p,r) \\[1mm]
                                               &\leq& \eps_2 \cdot (\distt(p_\sigma,p)+\distt(p,r)-\distt(p,p_\sigma)) +  \distt(p,r)  \\[1mm]
                                              & = & (1+\eps_2)\cdot \distt(p,r).
\end{array}
\]

\item \emph{Case B:} $r \in \sigma(p)$.
Now we have $\dist_\sigma(p',r) \leq \eps_2 \cdot \distt(p,p_\sigma)$, and so
\[
\begin{array}{lll}
 \distt(p,p') +   d_{\sigma}(p',r) &\leq& \distt(p,r) + \eps_2 \cdot \distt(p,p_\sigma) \\[1mm]
                                               &\leq& \distt(p,r) + \eps_2 \cdot \distt(p,r) \\[1mm]
                                               &=& (1+\eps_2)\cdot \distt(p,r).
\end{array}
\]

\end{itemize}
So in both cases we have $\distt(p,p') +   d_{\sigma}(p',r) \leq (1+\eps_2)\cdot  \distt(p,r)$.
In a similar way we can prove that $\distt(q,q') +   d_{\sigma}(q',r) \leq (1+\eps_2)\cdot  \distt(q,r)$. Hence, $\distso(p',q') \leq (1+\eps_2) \cdot  \distt(p,q)$.
Combing this with Inequality~(\ref{eq:2to6}) from the proof of Lemma~\ref{le:ratio6},
which now holds with $p_\sigma$ replaced by $p'$, we obtain
$\distg(p,q) \leq (2+\eps_1) \cdot (1+\eps_2) \cdot  \distt(p,q)$.
Picking $\eps_1=\eps_2=\eps/4$ now gives us the desired spanning ratio
(assuming without loss of generality that $\eps\leq 1$).
\end{proof}

\section{Concluding remarks}
We have shown that any set of $n$ points on a polyhedral terrain $\T$ admits a geodesic spanner
of spanning ratio $2+\eps$ and with $O(n\log n)$ edges. This is the first geodesic spanner for
points on a terrain. In fact, our method works in a more general setting than for polyhedral
terrains: it suffices to have a piecewise-linear surface that is a topological disk.
(In fact, our method also works for smooth surfaces, under certain mild conditions
that make shortest paths be well behaved.)
In the current paper we have focused on the proving the existence
of sparse geodesic spanner, leaving the efficient computation of such spanners to future research.


\end{document}